\title{Retrodirective Multi-User Wireless Power Transfer with Massive MIMO
\thanks{The authors are with the Department of Electrical and Computer Engineering, National University of Singapore, Singapore (email: \{s.lee, elezeng, elezhang\}@nus.edu.sg).}
}
\author{Seunghyun Lee,~\IEEEmembership{Student Member,~IEEE}, Yong Zeng,~\IEEEmembership{Member,~IEEE}, and  Rui Zhang,~\IEEEmembership{Fellow,~IEEE} \vspace{-5ex}}
\newtheorem{theorem}{\underline{Theorem}}[section]
\newtheorem{lemma}{\underline{Lemma}}[section]
\newtheorem{corollary}{\underline{Corollary}}[section]
\newcommand{\mv}[1]{\mbox{\boldmath{$ #1 $}}}
\def\E{\mathsf{E}}
\def\l{\left}
\def\r{\right}
\def\({\left(}
\def\){\right)}
\def\b0{{\mathbf{0}}}
\def\mC{{\mathbb{C}}}
\def\cC{\mathcal{C}}
\def\cN{\mathcal{N}}
\newcommand{\diag}{\mathrm{diag}}
\def\Pmax{P_\mathsf{max}}
\begin{document}
\maketitle \thispagestyle{empty}

\begin{abstract}
This letter studies a radio-frequency (RF) multi-user wireless power transfer (WPT) system, where an energy transmitter (ET) with a large number of antennas delivers energy wirelessly to multiple distributed  energy receivers (ERs). We investigate a low-complexity WPT scheme based on the \emph{retrodirective} beamforming technique, where all ERs send a common beacon signal simultaneously to the ET in the uplink and the ET simply conjugates and amplifies its received sum-signal and transmits to all ERs in the downlink for WPT. We show that such a low-complexity scheme achieves the massive multiple-input multiple-output (MIMO) energy beamforming gain. However, a ``doubly near-far" issue exists due to the round-trip (uplink beacon and downlink WPT) signal propagation loss where the harvested power of a far ER from the ET can be significantly lower than that of a near ER if the same uplink beacon power is used. To tackle this problem, we propose a distributed uplink beacon power update algorithm, where each ER independently adjusts the beacon power based on its current harvested power in an iterative manner. It is shown that the proposed algorithm converges quickly to a unique fixed-point solution, which helps achieve the desired user fairness with best efforts. 
\end{abstract}

\begin{IEEEkeywords}
Wireless power transfer, retrodirective transmission, energy beamforming, massive MIMO, distributed power control.
\end{IEEEkeywords}

\vspace{-4ex}
\section{Introduction}
Radio-frequency (RF) transmission enabled wireless power transfer (WPT) offers a cost-effective solution for supplying  power perceptually to wireless devices in energy-constrained networks \cite{J_BHZ:2015,J_ZZC:2016}. Compared to alternative WPT technologies such as inductive coupling, RF enabled WPT has several  promising advantages such as wide coverage, low production cost, and smaller transmitter and receiver form factors, etc. In particular, to overcome the significant power loss over distance in RF WPT, the technique of multi-antenna enabled directional transmission or energy beamforming (EB) is essential (see, e.g., \cite{J_ZZC:2016} and the references therein). However, for practical implementation of EB, accurate knowledge of the channel state information (CSI) is required at the energy transmitter (ET). Various CSI acquisition methods designed for WPT have been proposed, including the forward-link training with receiver CSI feedback \cite{J_YHG:2014}, the reverse-link training by exploiting channel reciprocity \cite{J_ZZ:2015,J_ZZ:2015_b}, and the energy-feedback based training \cite{J_XZ:2016,J_LZ:2017,J_CKC:2017}. 

However, the training and feedback overhead of the forward-link training and energy-feedback based training methods can be practically high, especially in multi-user WPT systems with a large number of energy receivers (ERs) and/or massive multiple-input multiple-output (MIMO) WPT systems with a large number of transmit antennas. The reverse-link training approach successfully overcomes this issue as there is no ER feedback required and the training length is independent of the number of transmit antennas, which is an appealing advantage for massive MIMO WPT \cite{J_YHZG:2015,J_KBL:2016}. Nevertheless, the performance of this method critically depends on multi-user orthogonal pilots assignment among the ERs, which causes the so-called ``pilot contamination" problem in massive MIMO WPT systems with many ETs and ERs \cite{J_ZWZ:2016}. Furthermore, multi-user WPT systems in general suffer from the energy near-far problem  \cite{J_BHZ:2015}, where the harvested power of ERs can vary significantly depending on their distances from the ET. Thus, how to achieve a balanced performance among near-far ERs and yet with affordable low complexity is still challenging in implementing multi-user WPT systems in practice, especially with massive MIMO.

A promising low-complexity WPT scheme to achieve the above goal is the \emph{retrodirective} beamforming technique \cite{J_ZZC:2016}. In this technique, the ERs transmit a common beacon signal simultaneously to the ET in the uplink, and the ET simply amplifies the conjugate of the received sum-signal at each of its antennas and broadcasts to all ERs in the downlink for WPT. In this letter, we study a multi-user massive MIMO WPT system based on the technique of retrodirective EB. An important practical issue in retrodirective-based multi-user WPT is the so-called ``doubly near-far" problem \cite{J_ZZC:2016}, where the uplink beacon signal sent by a far ER is weakly received at the ET as compared to that of a near ER. Thus, the retrodirective EB becomes ineffective for the far ER in the downlink WPT, resulting in unfair performance among near-far ERs. To resolve this problem, we propose a distributed beacon power update algorithm, where each ER independently adjusts the beacon power based on its current harvested power in an iterative manner, subject to the maximum beacon power constraint. It is shown that with the proposed algorithm, the uplink beacon powers converge quickly to a unique fixed-point solution. Moreover, the converged beacon powers of the near ERs are effectively reduced as compared to those of the far ERs, thus helping to achieve the desired performance balance among all ERs with best efforts.  

\vspace{-2ex}
\section{System Model}
We consider a multi-user massive MIMO WPT system, where an ET equipped with $M_\text{t}$ antennas sends energy wirelessly to $K$ single-antenna ERs, where $M_\text{t} \gg K$. We assume a narrow-band block-fading channel model where the downlink channel from the $m$th antenna of the ET to ER$_k$ is denoted by $h_{km} = \sqrt{\beta_k}\tilde{h}_{km}$, $m=1,...,M_\text{t}$, $k=1,...,K$, with $\beta_k$ modelling the large-scale fading depending on the link distance, and $\tilde{h}_{km}$ representing the small-scale fading, such as the Rayleigh fading with $\tilde{h}_{km}$'s being independent and identically distributed (i.i.d.) circularly symmetric complex Gaussian (CSCG) random variables each with zero mean and unit variance, i.e., $\tilde{h}_{km}\sim \cC\cN(0,1)$, $m=1,...,M_\text{t}$, $k=1,...,K$; it is also assumed that $\beta_k$ remains constant in this letter but $\tilde{h}_{km}$ can vary from block to block. Denote the multiple-input single-output (MISO) downlink channel vector from the $M_\text{t}$ antennas of the ET to ER$_k$ as $\mv{h}_k^* \triangleq [h_{k1},...,h_{kM_\text{t}}]^T \in \mC^{M_\text{t}\times 1}$, where $\mv{a}^*$ and $\mv{a}^T$ denote the conjugate and transpose of a complex-valued vector $\mv{a}$, respectively. We further assume that the downlink and uplink channels are reciprocal, and thus the uplink channel vector from ER$_k$ to the ET is given by $\mv{h}_k^H$, where $\mv{a}^H$ denotes the conjugate transpose of a complex-valued vector $\mv{a}$.

\vspace{-1ex}
\section{Retrodirective WPT} 
\label{Section:RA}

In this section, we propose a low-complexity massive MIMO WPT scheme based on the technique of \emph{retrodirective} WPT by exploiting the channel reciprocity \cite{J_ZZC:2016}, where each block consists of two phases, as illustrated in Fig.~\ref{Fig:RA}.

\subsubsection{Beacon Phase}

\begin{figure}
\centering
\subfigure[Beacon phase]{
\centering
\includegraphics[width=4.2cm]{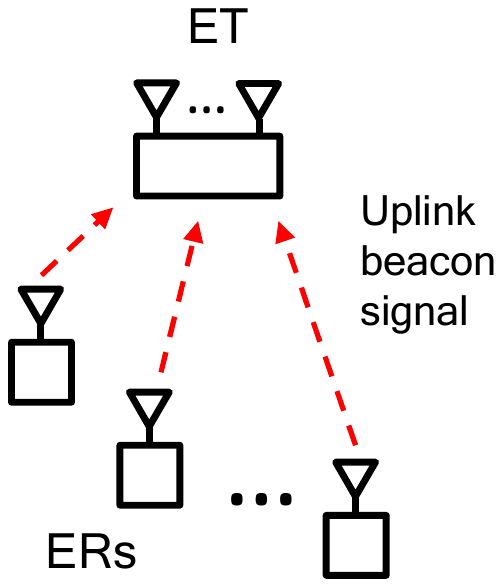}\label{Fig:RA1}} 
\subfigure[WPT phase]{
\centering
\includegraphics[width=4.2cm]{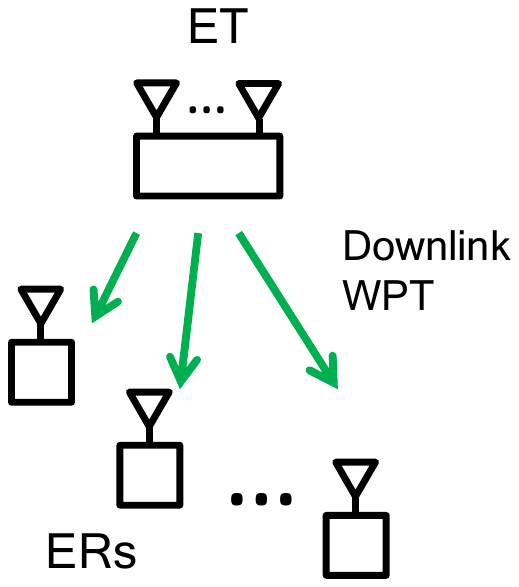}\label{Fig:RA2}} 
\caption{The proposed multi-user massive MIMO WPT system based on retrodirective transmission.}
\label{Fig:RA}
\vspace{-2ex}
\end{figure}

As shown in Fig.~\ref{Fig:RA1}, in the first phase for uplink training, referred to as the \emph{beacon phase}, each ER$_k$, $k=1,...,K$, sends a single-tone beacon waveform $\phi_k(t) =  \sqrt{2 p_k} \cos(2 \pi f_c t) $ in the uplink to the ET for a duration of $\tau > 0$, where $0 \leq p_k \leq \Pmax$ is the transmit power of ER$_k$  with $\Pmax$ denoting the maximum beacon power constraint, and $f_c$ is the carrier frequency. Thus, the equivalent baseband representation of the received signal at the ET from all ERs can be expressed as
\begin{align}
\mv{y}(t) & = \sum_{k=1}^K \sqrt{p_k} \mv{h}_k^* + \mv{z}(t) \\
& = \mv{g}  + \mv{z}(t), \; 0 \leq t \leq \tau,
\end{align}
where $\mv{z}(t) \triangleq [z_1(t),...,z_{M_\text{t}}(t)]^T$ denotes the i.i.d. zero-mean additive white Gaussian noise (AWGN) with power spectral density $N_0$, and $\mv{g} \triangleq \sum_{k=1}^K \sqrt{p_k} \mv{h}_{k}^*$ is the effective channel observed by the ET which is a weighted linear combination of the MISO channels of all the $K$ ERs. The ET then performs a matched-filter operation to its received signal $\mv{y}(t)$ to obtain $\hat{\mv{g}}$, which is given by
\begin{equation} \label{Eq:FilteredNoise}
\hat{\mv{g}}  = \frac{1}{\tau} \int_0^\tau \mv{y}(t) dt  = \mv{g} + \tilde{\mv{g}},
\end{equation}
where $\tilde{\mv{g}} \triangleq \frac{1}{\tau}\int_{0}^\tau \mv{z}(t) dt$. It can be shown that $\tilde{\mv{g}} \sim \cC\cN(\mv{0},\frac{N_0}{\tau} \mv{I})$, where $\mv{0}$ and $\mv{I}$ denote the all-zero vector with size $M_\text{t}\times 1$ and the identity matrix with size $M_\text{t}\times M_\text{t}$, respectively.

\subsubsection{WPT Phase}
Next, as shown in Fig.~\ref{Fig:RA2}, in the second phase for the downlink WPT, termed the \emph{WPT phase}, the ET transmits wireless energy to all ERs via the retrodirective WPT. Specifically, each antenna at the ET sends a single-tone sinusoidal waveform with the carrier frequency $f_c$ same as the uplink beacon signal, whose phase and amplitude are set according to the conjugate of the corresponding element in $\hat{\mv{g}}$, subject to the maximum total transmit power $P_\text{t}$ at the ET. Accordingly, the baseband equivalent of the transmitted signal vector from the $M_\text{t}$ transmit antennas of the ET can be expressed as
\begin{equation}
\mv{x} = \sqrt{P_\text{t}}\frac{\hat{\mv{g}}^*}{||\hat{\mv{g}}||},
\end{equation}
where we have dropped the time index $t$ since the baseband signal $\mv{x}$ is invariant over $t$. Then, the received signal at each ER$_k$ is given by $r_k = \mv{h}_k^H \mv{x}$, $k=1,...,K$. The corresponding harvested power at ER$_k$ during the WET phase, denoted by $Q_k(\mv{p})$, is a function of the beacon power vector $\mv{p} \triangleq [p_1,...,p_K]^T$, which can be expressed as
\begin{equation} \label{Eq:HarvestedPower_Retro} 
Q_k(\mv{p}) = \eta_k|r_k|^2 =  \eta_k\frac{P_\text{t}}{||\hat{\mv{g}}||^2} \l|\sum_{l=1}^K \sqrt{p_l}  \mv{h}_k^H \mv{h}_l  + \mv{h}_k^H\tilde{\mv{g}}^* \r|^2,
\end{equation}
where $0<\eta_k\leq 1$, $k=1,...,K$, denotes the RF-to-direct current (DC) energy conversion efficiency,  which is a constant and thus omitted in the sequel for brevity. It is interesting to note from \eqref{Eq:HarvestedPower_Retro} that the amount of harvested power at each ER$_k$ is related to its own as well as all other $K-1$ ERs' beacon powers.

\begin{lemma}
With $M_\text{t} \gg K$, the harvested power at each ER$_k$ given in \eqref{Eq:HarvestedPower_Retro} converges almost surely to
\begin{equation} \label{Eq:HarvestedPower_Retro_Massive}
Q_k(\mv{p}) \rightarrow  \l(P_\text{t} \beta_k+ P_\text{t}\frac{p_k \beta_k^2 (M_\text{t} -1)}{\sum_{l=1}^K p_l \beta_l + \frac{N_0}{\tau}}\r), \; k=1,...,K.
\end{equation}
\end{lemma}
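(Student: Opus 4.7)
The plan is to apply the law of large numbers (LLN) to the various Gaussian inner products appearing in $Q_k(\mv{p})$ in the regime $M_\text{t}\gg K$ and then simplify the resulting ratio.

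First, I would analyze the denominator $\|\hat{\mv{g}}\|^2$. Expanding $\|\hat{\mv{g}}\|^2 = \|\sum_l\sqrt{p_l}\mv{h}_l^* + \tilde{\mv{g}}\|^2$ into self-, cross-, and noise terms, and invoking the strong LLN together with the mutual independence of the $\mv{h}_l$'s and $\tilde{\mv{g}}$, one obtains $\tfrac{1}{M_\text{t}}\|\mv{h}_l\|^2\to\beta_l$ almost surely, $\tfrac{1}{M_\text{t}}\mv{h}_l^T\mv{h}_m^*\to 0$ almost surely for $l\neq m$, and $\tfrac{1}{M_\text{t}}\|\tilde{\mv{g}}\|^2\to N_0/\tau$ almost surely, while the cross terms between the channel vectors and $\tilde{\mv{g}}$ likewise vanish. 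Summing yields $\|\hat{\mv{g}}\|^2 \to M_\text{t}\bigl(\sum_{l=1}^K p_l\beta_l + N_0/\tau\bigr)$.

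Next, I would expand the numerator as $|A_k|^2 = \bigl|\sqrt{p_k}\|\mv{h}_k\|^2 + \mv{h}_k^H\mv{w}_k\bigr|^2$, where $\mv{w}_k \triangleq \sum_{l\neq k}\sqrt{p_l}\mv{h}_l + \tilde{\mv{g}}^*$ is independent of $\mv{h}_k$. The coherent self term contributes $p_k\|\mv{h}_k\|^4 \to p_k M_\text{t}^2\beta_k^2$ by LLN; the cross contribution $2\sqrt{p_k}\|\mv{h}_k\|^2\,\Re(\mv{h}_k^H\mv{w}_k)$ has zero conditional mean given $\mv{h}_k$ (since $\mv{w}_k$ is independent of $\mv{h}_k$ with zero mean) and is of strictly lower order than the self-self term, hence negligible; and the incoherent quadratic form $|\mv{h}_k^H\mv{w}_k|^2 = \mv{h}_k^H\mv{w}_k\mv{w}_k^H\mv{h}_k$ has conditional expectation $\|\mv{h}_k\|^2\bigl(\sum_{l\neq k}p_l\beta_l + N_0/\tau\bigr)$ given $\mv{h}_k$, which again by LLN tends to $M_\text{t}\beta_k\bigl(\sum_{l\neq k}p_l\beta_l + N_0/\tau\bigr)$. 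Combining these gives $|A_k|^2 \to p_k M_\text{t}^2\beta_k^2 + M_\text{t}\beta_k\bigl(\sum_{l\neq k}p_l\beta_l + N_0/\tau\bigr)$.

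Finally, substituting into $Q_k(\mv{p}) = P_\text{t}|A_k|^2/\|\hat{\mv{g}}\|^2$ and using the identity $\sum_{l\neq k}p_l\beta_l = \sum_l p_l\beta_l - p_k\beta_k$ yields
\[
Q_k(\mv{p}) \to P_\text{t}\beta_k + P_\text{t}\frac{p_k(M_\text{t}-1)\beta_k^2}{\sum_{l=1}^K p_l\beta_l + N_0/\tau},
\]
where the $(M_\text{t}-1)$ factor emerges from the cancellation $p_k M_\text{t}\beta_k^2 - p_k\beta_k^2$ after extracting the constant $P_\text{t}\beta_k$. The main obstacle is cleanly separating the coherent $\mathcal{O}(M_\text{t}^2)$ self-beamforming contribution from the incoherent $\mathcal{O}(M_\text{t})$ interference and noise contributions, and verifying that the zero-mean cross terms contribute to strictly lower order in the massive MIMO limit so that the subleading constant $P_\text{t}\beta_k$ is correctly captured rather than absorbed into the leading gain.
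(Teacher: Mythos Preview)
Your proposal is correct and follows essentially the same approach as the paper's proof: both apply standard massive-MIMO LLN limits to the inner products $\tfrac{1}{M_\text{t}}\|\mv{h}_k\|^2\to\beta_k$, $\tfrac{1}{M_\text{t}}\mv{h}_k^H\mv{h}_l\to 0$, $\tfrac{1}{M_\text{t}}|\mv{h}_k^H\mv{h}_l|^2\to\beta_k\beta_l$, $\tfrac{1}{M_\text{t}}|\mv{h}_k^H\tilde{\mv{g}}^*|^2\to\beta_k N_0/\tau$, and $\tfrac{1}{M_\text{t}}\|\hat{\mv{g}}\|^2\to\sum_l p_l\beta_l+N_0/\tau$, and then substitute and simplify. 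Your grouping of the cross-user and noise terms into a single $\mv{w}_k$ and your explicit algebra for extracting the $(M_\text{t}-1)$ factor are more detailed than the paper's terse listing of limits, but the underlying argument is identical.
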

\begin{proof}
With $M_\text{t} \gg K$, it can be shown that  $\frac{1}{M_\text{t}}||\mv{h}_k||^2 \rightarrow  \E[|h_{km}|^2] =   \beta_{k}$, $m=1,...,M_\text{t}$, $\frac{1}{M_\text{t}} \mv{h}_k^H \mv{h}_l \rightarrow  0$, $\forall k\neq l$, $\frac{1}{M_\text{t}}\mv{h}_k^H \tilde{\mv{g}}^* \rightarrow  0$, $\frac{1}{M_\text{t}} |\mv{h}_{k}^H \mv{h}_{l}|^2 \rightarrow  \beta_k \beta_l$, $\forall k \neq l$, $\frac{1}{M_\text{t}} |\mv{h}_{k}^H \tilde{\mv{g}}^*|^2 \rightarrow \beta_k \frac{N_0}{\tau}$, and $\frac{1}{M_\text{t}} ||\hat{\mv{g}}||^2 \rightarrow \l(\sum_{l=1}^K p_l \beta_l + \frac{N_0}{\tau}\r)$ \cite{J_M:2010}. Applying these results to \eqref{Eq:HarvestedPower_Retro} yields \eqref{Eq:HarvestedPower_Retro_Massive}. 
\end{proof}

Several interesting observations can be made from \eqref{Eq:HarvestedPower_Retro_Massive}. First, if $p_k = 0$, the harvested power at ER$_k$ is $Q_k(\mv{p}) = P_\text{t} \beta_k$. This means that regardless of the beacon power of all other $K-1$ ERs, even if ER$_k$ does not send any beacon signal, it can still harvest the constant power $P_\text{t} \beta_k$, which is the amount as if the ET isotropically broadcasts with power $P_\text{t}$. Second, for the case of $K=1$ and $p_1\beta_1 \gg \frac{N_0}{\tau}$, i.e., with only a single ER and sufficiently long duration of beacon phase $\tau$ or large $p_1\beta_1,$ where the noise effect can be ignored, \eqref{Eq:HarvestedPower_Retro_Massive} becomes $Q_1(p_1) = M_\text{t} P_\text{t} \beta_1$. This can be shown to be the maximum power that can be harvested with the optimal maximum ratio transmission (MRT) beamforming from the ET to ER$_1$, with its perfect CSI known at the ET, thus achieving the optimum single-user massive MIMO EB gain. Third, it can be seen from \eqref{Eq:HarvestedPower_Retro_Massive} that there exists a trade-off among the harvested powers at the $K$ ERs. Specifically, for each ER$_k$, the harvested power $Q_k(\mv{p})$ in \eqref{Eq:HarvestedPower_Retro_Massive} is a strictly increasing function of its own beacon power $p_k$, but a strictly decreasing function of all other $K-1$ ERs' beacon powers $p_l$, $\forall l \neq k$. In other words, using higher beacon power $p_k$ by ER$_k$ is helpful to increase its own harvested power, but will decrease the harvested powers of all other ERs. Last, if the same beacon power is used by all ERs, i.e., $p_1=...=p_K$, \eqref{Eq:HarvestedPower_Retro_Massive} will result in unfair performance among near-far ERs, since the dominant harvested power  of each ER (see the second term in \eqref{Eq:HarvestedPower_Retro_Massive}) is proportional to $\beta_k^2$. This reveals a critical ``doubly near-far" problem in retrodirective-based multi-user WPT; while in order to achieve the balanced harvested power for all ERs, it follows from \eqref{Eq:HarvestedPower_Retro_Massive} that ERs with smaller $\beta_k^2$ (i.e., more far-away from the ET) need to use significantly higher beacon power than those with larger $\beta_k^2$ (i.e., nearer to the ER), due to the round-trip power loss in both uplink beacon transmission and downlink WPT. 

It is worth noting that an effective method to mitigate the above doubly near-far problem is that the ER first obtains the knowledge of all $\beta_k$'s, $k=1,...,K$, via e.g., the reverse-link training \cite{J_ZZ:2015,J_ZZ:2015_b} and then assigns a proper value of $p_k$ to each ER$_k$ based on \eqref{Eq:HarvestedPower_Retro_Massive} to balance the harvested power among near-far ERs. However, this method requires multi-user orthogonal pilots assignment among ERs for uplink training as well as the downlink transmission of the optimized beacon powers to individual ERs, which is costly especially when the number of ETs/ERs is large. Thus, in the following section, we propose a distributed algorithm for each ER to independently update its beacon power in an iterative manner based only on its own received power and harvested power target, thus helping to achieve the desired user fairness  for the low-complexity retrodirective-based WPT.

\vspace{-1ex}
\section{Distributed Beacon-Power Update}

Let $\bar{Q}_k \geq 0$, $k=1,...,K$, denote the harvested power target that ER$_k$ aims to achieve, i.e.,  $Q_k(\mv{p}) \geq \bar{Q}_k$, where $Q_k(\mv{p})$ is given in \eqref{Eq:HarvestedPower_Retro_Massive}.  Without loss of generality, we assume that $\bar{Q}_k \geq P_\text{t}\beta_k$, $k=1,...,K$, since each ER$_k$ is guaranteed to harvest at least $P_\text{t}\beta_k$ amount of power even without transmitting a beacon signal, i.e., $p_k = 0$, as discussed in Section~\ref{Section:RA}. For convenience, we first define $q_k(\mv{p}) = Q_k(\mv{p}) - P_\text{t}\beta_k$ and $\bar{q}_k = \bar{Q}_k - P_\text{t}\beta_k$, $k=1,...,K$. It then follows that  $Q_k(\mv{p}) \geq \bar{Q}_k$ is equivalent to $q_k(\mv{p}) \geq \bar{q}_k$, $k=1,...,K$. We then propose the following beacon power update algorithm, where each ER$_k$ iteratively updates $p_k$ at the end of the WPT phase in each block based on the current value of $q_k(\mv{p})$ and the power target $\bar{q}_k$:
\begin{equation} \label{Eq:Iteration_Energy}
p_k[n+1] = \min\l\{\Pmax, \; \frac{\bar{q}_k}{q_k(\mv{p}[n])} p_k[n]\r\}, \; k=1,...,K, 
\end{equation}
where $n \geq 1$ is the block index. Note that for the above proposed scheme, each ER$_k$ only needs to know the value of $P_\text{t} \beta_k$, $k=1,...,K$. This can be easily obtained in an initial block ($n=0$) by each ER$_k$ via setting $p_k[0] = 0$ and then measuring the harvested power.

The power-update rule given in \eqref{Eq:Iteration_Energy} implies that each ER$_k$ will increase its beacon power  if $q_k(\mv{p}[n]) < \bar{q}_k$, provided that $0<p_k[n] < \Pmax$, or decrease it if $q_k(\mv{p}[n]) > \bar{q}_k$ and $p_k[n] > 0$. It is worth noting that \eqref{Eq:Iteration_Energy} resembles the distributed power control algorithm proposed in \cite{J_Y:1995,J_GZY:1994}, for the case of multi-user communication with co-channel interference. Then, the following theorem guarantees that for any given target $\bar{Q}_k \geq P_\text{t} \beta_k$, i.e., $\bar{q}_k \geq 0$, $k=1,...,K$, the beacon power update algorithm in \eqref{Eq:Iteration_Energy} converges to a unique fixed-point solution.
\begin{theorem} \label{Statement:RA WET Convergence}
For a given set of harvested power targets $\bar{Q}_k \geq P_\text{t} \beta_k$ and any initial beacon power vector $\mv{p}[1]$ with $0 < p_k[1] \leq \Pmax$, $k=1,...,K$, the beacon power update algorithm in \eqref{Eq:Iteration_Energy} converges to a unique fixed-point solution $\mv{p}^\star = [p_1^\star,...,p_K^\star]^T$, where the ERs with $p_k^\star < \Pmax$ have $Q_k(\mv{p}^\star) = \bar{Q}_k$, i.e., achieving their respective targets; whereas the ERs with $p_k^\star = \Pmax$ have $Q_k(\mv{p}^\star) \leq \bar{Q}_k$. 
\end{theorem}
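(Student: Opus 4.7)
The plan is to recognize the update \eqref{Eq:Iteration_Energy} as an instance of Yates' framework of \emph{standard interference functions} \cite{J_Y:1995}. First I would substitute the closed-form expression $q_k(\mv{p}) = Q_k(\mv{p}) - P_\text{t}\beta_k = P_\text{t} p_k \beta_k^2 (M_\text{t}-1)/(\sum_l p_l \beta_l + N_0/\tau)$ implied by \eqref{Eq:HarvestedPower_Retro_Massive} into the multiplier $\bar{q}_k p_k/q_k(\mv{p})$; the $p_k$ factor cancels, leaving the affine map
\begin{equation*}
I_k(\mv{p}) \triangleq \frac{\bar{q}_k}{q_k(\mv{p})} p_k = \frac{\bar{q}_k}{P_\text{t}\beta_k^2 (M_\text{t}-1)} \l(\sum_{l=1}^K p_l \beta_l + \frac{N_0}{\tau}\r),
\end{equation*}
so the iteration reads $\mv{p}[n+1] = T(\mv{p}[n])$ with $T_k(\mv{p}) \triangleq \min\{\Pmax, I_k(\mv{p})\}$ acting on the compact box $[0,\Pmax]^K$. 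This rewriting also removes the apparent $0/0$ issue in \eqref{Eq:Iteration_Energy} when $p_k[n]=0$, so the initial condition $p_k[1]>0$ is purely a natural assumption and not a delicate one.

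Next I would verify that $T$ is a standard interference function, i.e., each $T_k$ is positive, monotone, and scalable. Positivity is immediate since $I_k(\mv{p}) \geq \bar{q}_k N_0/(\tau P_\text{t}\beta_k^2(M_\text{t}-1)) > 0$; monotonicity follows from the non-negativity of the coefficients of the affine $I_k$, preserved by the componentwise clipping with $\Pmax$; scalability, i.e., $\alpha T_k(\mv{p}) > T_k(\alpha\mv{p})$ for all $\alpha>1$, reduces to the identity $I_k(\alpha\mv{p}) = \alpha I_k(\mv{p}) - (\alpha-1)\bar{q}_k N_0/(\tau P_\text{t}\beta_k^2(M_\text{t}-1)) < \alpha I_k(\mv{p})$, together with a short case split on whether the clipping at $\Pmax$ is active at $\mv{p}$, at $\alpha\mv{p}$, both, or neither.

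With the three axioms in hand, existence of a fixed point follows from Brouwer's theorem applied to the continuous self-map $T:[0,\Pmax]^K\to[0,\Pmax]^K$, and Yates' main theorem \cite{J_Y:1995} then yields both uniqueness of $\mv{p}^\star$ and convergence of the synchronous iteration from any initial vector with $0<p_k[1]\leq\Pmax$. Finally, the characterization of $\mv{p}^\star$ follows by inspecting the clipping: at indices with $p_k^\star<\Pmax$ the min is inactive, so $p_k^\star = I_k(\mv{p}^\star)$, which rearranges to $q_k(\mv{p}^\star)=\bar{q}_k$ and hence $Q_k(\mv{p}^\star)=\bar{Q}_k$; at indices with $p_k^\star=\Pmax$ the min is active, so $\Pmax \leq I_k(\mv{p}^\star)$, which rearranges to $q_k(\mv{p}^\star)\leq \bar{q}_k$ and hence $Q_k(\mv{p}^\star)\leq \bar{Q}_k$.

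The step I expect to be the main obstacle is the scalability check under the $\Pmax$-clipping: the affine $I_k$ is trivially scalable thanks to the strictly positive $N_0/\tau$ term, but one must argue carefully that saturation at $\Pmax$ cannot destroy the \emph{strict} inequality $\alpha T_k(\mv{p}) > T_k(\alpha\mv{p})$ in any of the possible clipping configurations of $\mv{p}$ versus $\alpha\mv{p}$. Once this routine case check is done, the remainder of the argument is a direct transcription of the classical uplink co-channel power-control convergence proof to our retrodirective WPT setting, with the signal-to-interference-plus-noise targets replaced by the harvested-power targets $\bar{q}_k$.
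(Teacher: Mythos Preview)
Your proposal is correct and follows essentially the same route as the paper: you rewrite the update as an affine map (which is exactly the paper's matrix form $\mv{p}\succeq\mv{A}(\mv{B}\mv{p}+\mv{\eta})$ written componentwise) and then invoke a classical distributed power-control convergence result. The only cosmetic difference is that the paper cites the DCPC result of \cite{J_GZY:1994} directly, whereas you verify Yates' standard-interference-function axioms \cite{J_Y:1995} and appeal to that framework instead; both references are already flagged in the paper and lead to the same conclusion, with your version supplying the axiom checks that the paper omits for space.
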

\begin{proof}
The condition $Q_k(\mv{p}) \geq \bar{Q}_k$, $k=1,...,K$ (or equivalently $q_k(\mv{p}) \geq \bar{q}_k$), can be expressed in a matrix form:
\begin{equation}
\mv{p} \succeq \mv{A}(\mv{B}\mv{p} + \mv{\eta}),
\end{equation}
where $\mv{A} \triangleq \frac{1}{P_\text{t}(M_\text{t}-1)}\diag\l(\frac{\bar{q}_1}{\beta_1^2},..., \frac{\bar{q}_K}{\beta_K^2}\r)$, $\mv{B}\triangleq [\beta_1\mv{1},...,\beta_K\mv{1}]$ with $\mv{1} \triangleq [1,...,1]^T$ being the all-one vector of size $K\times 1$, and $\mv{\eta} \triangleq \frac{N_0}{\tau}\mv{1}$. Since the elements of both $\mv{A}$ and $\mv{B}$ are all positive, the convergence proof of the distributed constrained power control (DCPC) algorithm given in \cite{J_GZY:1994} can be directly applied to show that \eqref{Eq:Iteration_Energy} converges to a unique fixed-point solution starting from any feasible non-zero power vector $\mv{p}[1]$; the detailed proof is thus omitted due to the space limitation. 
\end{proof}

From \eqref{Eq:HarvestedPower_Retro_Massive} and Theorem~\ref{Statement:RA WET Convergence}, we obtain the following corollary for the converged beacon power solution.
\begin{corollary} \label{Statement:BestEffort}
Given a common harvested power target $\bar{Q}$ for all ERs, i.e., $\bar{Q}_k = \bar{Q}$, $k=1,...,K$, it holds that $p_k^\star \geq p_l^\star$, $k\neq l$, if and only if $\beta_k \leq \beta_l$. 
\end{corollary}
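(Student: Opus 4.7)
The plan is to combine the fixed-point characterization from Theorem~\ref{Statement:RA WET Convergence} with the closed-form $Q_k(\mv{p})$ in \eqref{Eq:HarvestedPower_Retro_Massive} to reduce the claim to a single-variable monotonicity statement. Let $\mv{p}^\star$ denote the unique fixed point guaranteed by Theorem~\ref{Statement:RA WET Convergence}. First I would treat the generic case in which every $p_k^\star < \Pmax$, so that $Q_k(\mv{p}^\star) = \bar{Q}$ for all $k$. Observing that the denominator $S^\star \triangleq \sum_{l=1}^K p_l^\star \beta_l + N_0/\tau$ appearing in \eqref{Eq:HarvestedPower_Retro_Massive} is independent of $k$, solving $Q_k(\mv{p}^\star) = \bar{Q}$ for $p_k^\star$ yields the closed form
\begin{equation}
p_k^\star = \frac{(\bar{Q} - P_\text{t}\beta_k)\, S^\star}{P_\text{t}\beta_k^2 (M_\text{t} - 1)}.
\end{equation}

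Next, since the prefactor $S^\star/[P_\text{t}(M_\text{t}-1)]$ is strictly positive and common to all $k$, the ordering of $\{p_k^\star\}$ is determined entirely by the auxiliary function $f(\beta) \triangleq (\bar{Q} - P_\text{t}\beta)/\beta^2$. A direct differentiation gives $f'(\beta) = (P_\text{t}\beta - 2\bar{Q})/\beta^3$, and the standing assumption $\bar{Q} \geq P_\text{t}\beta_k$ forces $P_\text{t}\beta_k - 2\bar{Q} \leq -P_\text{t}\beta_k < 0$, so $f$ is strictly decreasing over the range of interest. Hence $\beta_k \leq \beta_l \Longleftrightarrow f(\beta_k) \geq f(\beta_l) \Longleftrightarrow p_k^\star \geq p_l^\star$, which establishes the corollary in the unsaturated case.

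The main subtlety, and where I expect most of the care to lie, is the saturated regime in which some $p_k^\star = \Pmax$, for which the equality $Q_k(\mv{p}^\star) = \bar{Q}$ need not hold and the closed form above breaks down. The plan here is to show that saturation is itself monotone in $\beta_k$: if ER$_l$ is saturated while ER$_k$ is not, then by Theorem~\ref{Statement:RA WET Convergence} we have $Q_k(\mv{p}^\star) = \bar{Q} \geq Q_l(\mv{p}^\star)$, and substituting \eqref{Eq:HarvestedPower_Retro_Massive} into this inequality yields $p_k^\star \beta_k^2 \geq \Pmax\, \beta_l^2$. Under the assumption $\beta_k \leq \beta_l$ this forces $p_k^\star \geq \Pmax$, contradicting $p_k^\star < \Pmax$. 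Thus the mixed pair ``ER$_k$ unsaturated, ER$_l$ saturated'' cannot occur whenever $\beta_k \leq \beta_l$, and the remaining sub-cases (both saturated, or only ER$_k$ saturated) trivially yield $p_k^\star \geq p_l^\star$, closing the argument in both directions of the ``if and only if''.
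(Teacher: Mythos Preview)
The paper gives no explicit proof for this corollary beyond the phrase ``From \eqref{Eq:HarvestedPower_Retro_Massive} and Theorem~\ref{Statement:RA WET Convergence},'' so your approach---solving $Q_k(\mv{p}^\star)=\bar{Q}$ for $p_k^\star$ and reducing the ordering to the monotonicity of $f(\beta)=(\bar{Q}-P_\text{t}\beta)/\beta^2$---is exactly what the authors have in mind, and your unsaturated-case argument is clean and correct.

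Two remarks on the saturated regime. First, the step ``substituting \eqref{Eq:HarvestedPower_Retro_Massive} into $Q_k\geq Q_l$ yields $p_k^\star\beta_k^2\geq\Pmax\,\beta_l^2$'' is not literally true before you invoke $\beta_k\leq\beta_l$: the additive terms $P_\text{t}\beta_k$ and $P_\text{t}\beta_l$ survive, and you need $\beta_l-\beta_k\geq 0$ to discard them. This is harmless since you are already working under that hypothesis, but the wording should be tightened. Second, your final clause ``closing the argument in both directions of the if and only if'' overstates what the case analysis delivers. You have established $\beta_k\leq\beta_l\Rightarrow p_k^\star\geq p_l^\star$ in all sub-cases, and by symmetry $\beta_k\geq\beta_l\Rightarrow p_k^\star\leq p_l^\star$; but these together do \emph{not} give the converse $p_k^\star\geq p_l^\star\Rightarrow\beta_k\leq\beta_l$, because when both ERs are saturated one has $p_k^\star=p_l^\star=\Pmax$ irrespective of the ordering of $\beta_k,\beta_l$. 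This is arguably a defect of the corollary as stated rather than of your argument---the ``only if'' direction genuinely fails in the doubly-saturated sub-case---so the honest claim is the forward implication in general, together with the full equivalence whenever at least one of the two ERs is unsaturated.
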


Corollary~\ref{Statement:BestEffort} implies that when $\bar{Q}_k = \bar{Q}$, $k=1,...,K$, the converged beacon power of a more far-away ER$_k$ (with smaller $\beta_k$) is no less than that of a nearer ER$_l$ (with larger $\beta_l$); as a result, the retrodirective-based WPT  becomes more effective for the farther ER$_k$, which thus helps mitigate the doubly near-far problem. However, if an ER$_k$ is too far away from the ET or its power target is too large,  it may not be able to achieve its target after the power-update algorithm converges; while in this case, from Theorem~\ref{Statement:RA WET Convergence}, it follows that $p_k^\star = \Pmax$, i.e., the proposed algorithm helps such ER to achieve its desired target with best efforts. 

\vspace{-2ex}
\section{Numerical Results} \label{Section:Simulation}
In this section, we present simulation results to validate the performance of our proposed beacon power update algorithm in \eqref{Eq:Iteration_Energy}. We set $M_\text{t} = 500$, $P_\text{t} = 1$ Watt (W), $\Pmax = 0.1$W, $f_c = 900$ MHz, $\tau = 10^{-6}$ second (s), $N_0 = -170$ dBm/Hz, and $\eta_k = 1$, $k=1,...,K$. Moreover, the large-scale channel attenuation $\beta_k$ is modelled as $\beta_k = c_0(r_k/r_0)^{-\alpha}$, where $c_0 = -30$ dB is a constant attenuation for the path-loss at a reference distance $r_0 = 1$ meter (m), $\alpha = 3$ is the path-loss exponent, and $r_k$ is the distance between the ET and ER$_k$.

We first present the convergence performance of the iterative algorithm \eqref{Eq:Iteration_Energy} in Figs.~\ref{Fig:Convergence1} and \ref{Fig:Convergence2} with $K=3$ and different values of the common harvested power target, i.e., $\bar{Q}_k = 0.1$ mW and $\bar{Q}_k = 0.24$ mW, $k=1,2,3$, respectively. Moreover, we set  $r_1 = 5$ m, $r_2 = 10$ m, $r_3 = 15$ m, and the initial beacon powers $p_k[1] = \Pmax = 0.1$ W, $k=1,2,3$. First, it can be observed from  both figures that the beacon power vector $\mv{p}[n]$ and the corresponding harvested power $Q_k(\mv{p}[n])$, $k=1,2,3$, converge to $\mv{p}^\star$ and $Q_k(\mv{p}^\star)$, respectively. Second, it can be seen from Fig.~\ref{Fig:Convergence1} that due to the relatively low  power target, we have $p_1^\star < p_2^\star < p_3^\star < \Pmax = 0.1$ W and $Q_k(\mv{p}^\star) = \bar{Q}_k$, $k=1,2,3$. By contrast, it is observed from Fig.~\ref{Fig:Convergence2} that with the increased power target for all ERs compared to that in Fig.~\ref{Fig:Convergence1}, we have $p_1^\star < p_2^\star < \Pmax = 0.1$ W, while $p_3^\star = \Pmax$. Consequently, $Q_k(\mv{p}^\star) = \bar{Q}_k$, $k=1,2$, whereas $Q_3(\mv{p}^\star) < \bar{Q}_3$ but at least it is improved as compared to the initial harvested power $Q_3(\mv{p}[1])$. The above results corroborate Theorem~\ref{Statement:RA WET Convergence} and Corollary~\ref{Statement:BestEffort}. Moreover, they confirm that the doubly near-far problem is effectively mitigated with best efforts by our proposed beacon power control algorithm, i.e., the converged beacon power of ER$_3$ (which is  most far-away from the ET) is no smaller than those of ER$_1$ and ER$_2$.

\begin{figure} 
\centering
\includegraphics[width=8.4cm]{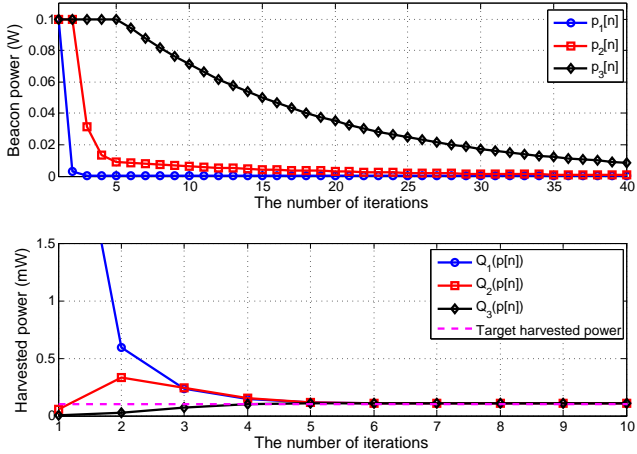}
\caption{Convergence of the beacon power and harvested power, where $p_k^\star < \Pmax = 0.1$ W, $k=1,2,3$. } \label{Fig:Convergence1}
\end{figure}

\begin{figure} 
\centering
\includegraphics[width=8.4cm]{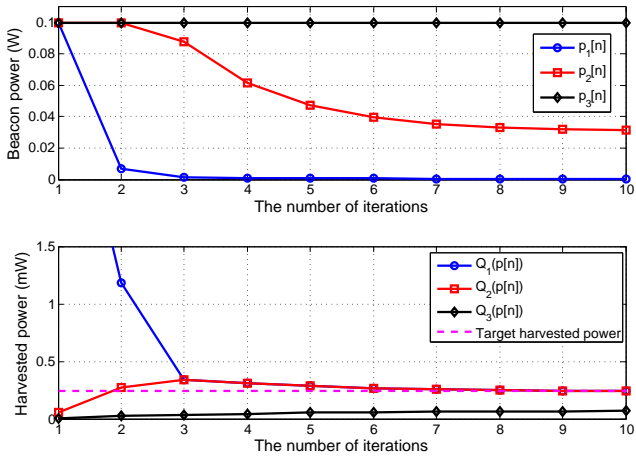}
\caption{Convergence of the beacon power and harvested power, where $p_1^\star,p_2^\star < \Pmax = 0.1$ W and $p_3^\star = \Pmax$. } \label{Fig:Convergence2} \vspace{-2ex}
\end{figure}

Last, in Fig.~\ref{Fig:Performance_Massive}, we plot the percentage of ERs that can achieve the common harvested power target $\bar{Q}_k = \bar{Q}$, $k=1,...,K$, after $20$ iterations of the proposed algorithm \eqref{Eq:Iteration_Energy} out of $K=30$ total ERs, by averaging over $5000$ randomly generated $r_k\sim \text{Uniform}(5\text{m},15\text{m})$, $k=1,...,K$.  For comparison, two benchmark schemes are also plotted, where each ER$_k$ fixes its beacon power to be $p_k = \Pmax$, $k=1,...,K$, and $p_k = 0.1\Pmax$, $k=1,...,K$, respectively. It can be observed from Fig.~\ref{Fig:Performance_Massive} that our proposed algorithm in general outperforms the benchmark schemes especially when the common power target $\bar{Q}$ is relatively small. As $\bar{Q}$ increases, the performance gain of the proposed algorithm diminishes, as in this case it is more likely that each ER cannot achieve its target even with maximum beacon power and thus the distributed beacon power control becomes less effective.  

\begin{figure} 
\centering
\includegraphics[width=8cm]{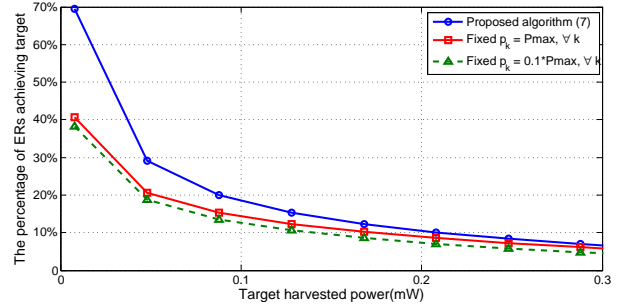}
\caption{The percentage of ERs that achieve the harvested power target with in total $K = 30$ ERs.} \label{Fig:Performance_Massive}
\vspace{-2ex}
\end{figure}

\vspace{-1ex}
\section{Conclusion}
In this letter, we investigated a low-complexity WPT scheme based on the new retrodirective beamforming technique in a multi-user massive MIMO WPT system. We proposed an efficient distributed uplink beacon power control algorithm where each ER independently updates its beacon power to achieve its harvested power target with best effort. It is shown that the proposed algorithm converges quickly to a fixed-point solution that achieves the desired performance balance among near-far ERs.



\newpage
\end{document}